\newtheorem{prop}{Proposition}
\newtheorem{remark}{Remark}
\title{\LARGE \bf
State-Constrained Optimal Control for Coherence Preservation in Multi-Level Open Quantum Systems
}
\author{Nahid Binandeh Dehaghani, A. Pedro Aguiar, Rafal Wisniewski
\thanks{N. Dehaghani and A. Aguiar are with the Research Center for Systems and Technologies (SYSTEC), Electrical and Computer Engineering Department, FEUP - Faculty of Engineering, University of Porto, Rua Dr. Roberto Frias sn, i219, 4200-465 Porto, Portugal
        {\tt\small \{nahid,pedro.aguiar\}@fe.up.pt}}%
\thanks{R. Wisniewski is with Department of Electronic Systems, Aalborg University, Fredrik Bajers vej 7c, DK-9220 Aalborg, Denmark
        {\tt\small raf@es.aau.dk}}%
\thanks{The authors acknowledge the support of FCT for the grant 2021.07608.BD, the ARISE Associated Laboratory, Ref. LA/P/0112/2020, and the R$\&$D Unit SYSTEC-Base, Ref. UIDB/00147/2020, and Programmatic, Ref. UIDP/00147/2020 funds, and also the support of projects SNAP, Ref. NORTE-01-0145-FEDER-000085, and  RELIABLE (PTDC/EEI-AUT/3522/2020) funded by national funds through FCT/MCTES. The work has been done in the honor and memory of Professor Fernando Lobo Pereira.}
}
\begin{document}
\maketitle
\thispagestyle{empty}
\pagestyle{empty}

\begin{abstract} 
This paper addresses the optimal control of quantum coherence in multi-level systems, modeled by the Lindblad master equation, which captures both unitary evolution and environmental dissipation. We develop an energy minimization framework to control the evolution of a qutrit (three-level) quantum system while preserving coherence between states. The control problem is formulated using Pontryagin’s Minimum Principle in the form of Gamkrelidze, incorporating state constraints to ensure coherence remains within desired bounds. Our approach accounts for Markovian decoherence, demonstrating that the Lindblad operator is non-unital, which reflects the irreversible decay processes inherent in the system. The results provide insights into effectively maintaining quantum coherence in the presence of dissipation.
\end{abstract} 

\section{Introduction}
Quantum coherence is a crucial phenomenon in quantum systems, underlying many key applications in quantum computing, quantum information processing, and quantum control \cite{yuan2022preserving}. However, decoherence — the loss of coherence due to interaction with the environment — poses a significant challenge to the preservation of quantum superpositions and entanglement. For multi-level quantum systems, such as qutrits (three-level systems), decoherence processes can lead to the rapid degradation of coherence and the associated quantum advantages. One of the most commonly encountered types of decoherence is Markovian decoherence, where the system interacts with its environment in a memoryless manner, leading to a continuous and irreversible loss of coherence. This non-unitary evolution can be effectively described using the Lindblad master equation \cite{manzano2020short}.

Markovian decoherence, particularly in systems characterized by non-unital decoherence channels, poses a distinct challenge compared to unital channels. 
Non-unital channels, such as those in spontaneous emission, result in irreversible decay to ground states, leading to system degradation \cite{blanchard2006decoherence}. The preservation of coherence in such systems is crucial, as coherence between quantum states is responsible for the interference patterns and superpositions that form the basis of quantum technology \cite[chapter 12]{cong2014control}.

To counteract the effects of decoherence, several strategies have been proposed, including quantum error-correction codes \cite{Josu2021}, decoherence-free subspaces \cite{hu2021optimizing}, dynamical decoupling techniques \cite{yuan2022preserving}, and classical feedback \cite{yuan2022preserving}. 
 
An alternative approach is to use optimal control theory to design control fields that maintain or restore quantum coherence during the system's evolution. Reference \cite{cui2008optimal} employs Pontryagin's Maximum Principle to achieve optimal decoherence control, focusing on how environmental engineering—particularly the coupling and state of the reservoir—can impact decoherence in a non-Markovian, open, dissipative quantum system. This approach, applied to a two-level system, highlights the significant influence that optimal control can have on decoherence dynamics.

This work complements our recent work in \cite{dehaghani2022quantum}, where we had proposed a quantum optimal control problem with state constraints in order to maximize fidelity in a quantum state transfer problem. Here, we focus on the problem of preserving the coherence of a multi-level quantum system subjected to Markovian decoherence in a minimum energy optimal control problem. We explore the dynamics of the system's density matrix, which governs both the population of states and the off-diagonal coherence elements. Using the Lindblad formalism, we model the non-unitary evolution of a multi-level quantum system and investigate how control fields can be applied to counteract decoherence. 

The main contributions of this work are twofold: (1) we develop a minimum energy optimal control strategy while preserving coherence, and (2) we implement state constraints to ensure the system's coherence remains within specified bounds throughout the evolution. By applying Pontryagin's Minimum Principle in Gamkrelidze form, we derive the necessary conditions for the optimal control, taking into account both the Hamiltonian evolution and the dissipative effects of the environment. Numerical simulations, given for a qutrit system with three energy levels, demonstrate the feasibility of this approach in maintaining coherence, providing insight into the control mechanisms necessary to mitigate decoherence in realistic quantum systems.


The paper is organized as follows: Section II discusses the theoretical background on density matrix dynamics and coherence in multi-level systems, with a focus on quantum coherence. Section III presents the optimal control problem statement. Section IV introduces Pontryagin’s Minimum Principle in its Gamkrelidze form. In Section V, the application of the developed algorithm to a qutrit system is detailed, including numerical simulations. Finally, Section VI provides the conclusions and outlines future directions.

\section{Density Matrix Dynamics and Coherence in Multi-Level Systems}

The state of an $N$-level quantum system is represented by a density matrix $\rho$ of size $N \times N$, where $N$ is the number of quantum levels. The general form of the density matrix $\rho$ is
\[
\rho = \begin{pmatrix}
\rho_{11} & \rho_{12} & \dots & \rho_{1N} \\
\rho_{21} & \rho_{22} & \dots & \rho_{2N} \\
\vdots & \vdots & \ddots & \vdots \\
\rho_{N1} & \rho_{N2} & \dots & \rho_{NN}
\end{pmatrix}.
\]
The set of density operators for an $N$-level quantum system, denoted by $\mathcal{D} \subset \mathbb{C}^{N \times N}$, is the set of all matrices $\rho$ that satisfy
\[
\mathcal{D} = \left\{ \rho \in \mathbb{C}^{N \times N} : \rho = \rho^\dagger, \rho \geq 0, \text{Tr}(\rho) = 1 \right\}.
\]
These properties ensure that $\rho$ represents a valid quantum state, whether pure or mixed, and allow for the evolution of the system while maintaining the correct physical constraints.

The diagonal elements $\rho_{jj}$ represent the populations of the system in the corresponding states, while the off-diagonal elements $\rho_{jk}$ for $j \neq k$ represent the coherence between states $j$ and $k$. These off-diagonal elements are crucial for describing quantum coherence, as they reflect the superpositions between different quantum states. When these elements are non-zero, the system exhibits quantum coherence. If they vanish, the system becomes a classical mixture of states. Therefore, preserving or manipulating these coherences is essential in quantum control.

The time evolution of the density matrix $\rho$ for an open quantum system is governed by the Lindblad master equation, which takes into account both the coherent dynamics of the system (described by the Hamiltonian $H$) and the dissipative processes (described by the Lindblad operators $\mathcal{L}$), which represent the interaction of the system with its environment. 

In this case, the total Hamiltonian $H$ consists of the \textit{free Hamiltonian} $H_0$ that describes the intrinsic energy levels and interactions within the system, and a term that depends explicitly on the input signal $u(t)$, such that
\begin{equation}\label{H}
 H(t) = H_0 + H_C(t)
\end{equation}
where
$H_C = H_c(t) u(t)$ represents the time-dependent control Hamiltonian.

The general form of the Lindblad master equation is given by \cite{dehaghani2022quantum}
\begin{equation}\label{lindblad}
\dot \rho = -\frac{i}{\hbar} [H(t), \rho] + \mathcal{L}(\rho).
\end{equation}
Here, $\hbar$ represents the reduced Planck constant, which governs the relation between energy and frequency in quantum mechanics.
The first term describes the unitary evolution due to the total Hamiltonian $H(t)$, where $[H(t), \rho]$ is the commutator of $H(t)$ and $\rho$. The second term, $\mathcal{L}(\rho)$, accounts for the non-unitary evolution resulting from the interaction of the system with its environment, leading to decoherence and dissipation. The Lindblad operator $\mathcal{L}(\rho)$ for a multi-level system can be written as
\begin{equation}\label{lindbladop}
\mathcal{L}(\rho) = \sum_k \gamma_k \left( L_k \rho L_k^\dagger - \frac{1}{2} \left\{ L_k^\dagger L_k, \rho \right\} \right),
\end{equation}
where $\gamma_k$ are the decay rates associated with different decoherence channels, and $L_k$ are the Lindblad operators that describe the specific interactions between the system and the environment. These dissipative effects typically degrade the coherence in the system by causing the off-diagonal elements of the density matrix to decay over time.
Lindblad superoperators can be categorized as either unital or non-unital. A superoperator \(\mathcal{L}\) is considered unital if it satisfies \(\mathcal{L}(I) = 0\), implying that the identity operator (representing a completely mixed state) remains unchanged under its action. This condition holds true if and only if all Lindblad operators \(L_k\) are Hermitian, as demonstrated by the relation \(\mathcal{L}(I) = \sum_k [L_k, L_k^\dagger]\) \cite{rooney2012control}.

\subsection{Quantum Coherence in Multi-Level Systems}

Quantum coherence in multi-level systems is a generalization of the superposition concept from two-level systems (qubits) to systems with $N$ energy levels. In these systems, coherence is distributed across multiple off-diagonal elements of the density matrix, representing superpositions between different quantum states. 


To further analyze the coherence between specific pairs of states, we define operators that capture the real and imaginary parts of the coherence. For two states $|j\rangle$ and $|k\rangle$, these operators are described as
\begin{align*}
\delta_{jk}^{\text{Re}} &= |j\rangle \langle k| + |k\rangle \langle j|, \\    
\delta_{jk}^{\text{Im}} &= -i |j\rangle \langle k| + i |k\rangle \langle j|.
\end{align*}
The operator $\delta_{jk}^{\text{Re}}$ describes the real part of the coherence, which captures the population coherence between the two states, while $\delta_{jk}^{\text{Im}}$ captures the imaginary part, which encodes the phase coherence. Both components are essential for understanding quantum interference and superposition phenomena in multi-level systems.

The expectation values of these operators with respect to the density matrix $\rho$ provide the real and imaginary contributions to the coherence:
\begin{align*}
\langle \delta_{jk}^{\text{Re}} \rangle_\rho &= \text{Tr}(\rho \delta_{jk}^{\text{Re}}), \\
\langle \delta_{jk}^{\text{Im}} \rangle_\rho &= \Tr(\rho \delta_{jk}^{\text{Im}}).
\end{align*}

A global measure of coherence for an $N$-level system can be constructed by summing the contributions from the real and imaginary parts of the coherence between all pairs of states. The coherence function $C(\rho)$ is defined as \cite{cong2014control}
\begin{equation} \label{eq:C}
C(\rho) = \sqrt{\sum_{j < k} \left( \langle \delta_{jk}^{\text{Re}} \rangle_\rho \right)^2 + \left( \langle \delta_{jk}^{\text{Im}} \rangle_\rho \right)^2}.   
\end{equation}
This function provides a comprehensive measure of the system's coherence by considering both the real and imaginary parts of the coherence for every pair of states $(j, k)$. It captures the overall coherence present in the system and serves as an indicator of how well quantum superpositions are maintained as the system evolves according to the Lindblad equation. 

\section{Optimal Control Problem Statement}

This section outlines the formulation of the quantum optimal control problem. The methodology proposed here for designing an optimal quantum controller hinges on several key factors: the selection of the cost functional, the formulation of the Pontryagin-Hamiltonian, and the computational approach used to solve the optimality conditions defined by Pontryagin’s Minimum Principle (PMP). 
In quantum control, achieving minimal energy expenditure while preserving system coherence presents a significant challenge. To address this, we focus on an energy minimization approach that incorporates state constraints specifically aimed at maintaining quantum coherence within the control problem. Specifically, we consider the following cost function:
\begin{equation*}
J = 
\int_{{{t}_{0}}}^{t_f} u^2(t) \, dt,
\end{equation*}
which represents the total energy of the control field over the (fixed) time interval \([t_0, t_f]\). This cost functional is a standard choice in quantum and molecular control problems. Here, the control signal \(u(t):[t_0, t_f] \to \mathcal{U} := \{ u \in L_{\infty} : u(t) \in \Omega \subset \mathbb{R} \}\) is assumed to be bounded and measurable.
The optimal control problem with state constraints denoted as \((P)\), is formulated as follows:
\begin{equation*}
(P)\left\{ \begin{aligned}
  &\qquad \min_{u(\cdot)}  J
  \\ 
  & \text{subject to}  \\ 
  & \dot{\rho} = -\frac{i}{\hbar} [H, \rho] + \mathcal{L}(\rho), \quad \text{for almost every } t \in [t_0, t_f] \\
  & \rho(t_0) = \rho_0 \in \mathbb{C}^{N \times N} \\ 
  & u(t) \in \mathcal{U} 
  \quad \text{for almost every } t \in [t_0, t_f] \\
  & \alpha \leq {C}^2(\rho) \leq \beta, \quad \text{for all } t \in [t_0, t_f] \\
\end{aligned} \right.
\end{equation*}


The inequality constraint \(\alpha \leq C^2(\rho(t)) \leq \beta\) introduces state constraints that enforce the preservation of quantum coherence during the evolution of the system, where \({C}^2(\rho)\) represents a coherence-related measure of the quantum state \(\rho\) defined in \eqref{eq:C}.

In this setting, all sets are assumed to be Lebesgue measurable, and all functions are assumed to be Lebesgue measurable and integrable. The objective is to determine an optimal pair \((\rho^\star, u^\star)\) that minimizes the cost functional \(J\) while satisfying the dynamic equations, 
and state constraints.

Before presenting the optimality conditions we first analyze the feasibility of the optimal control problem (P), that is, if there is a solution $u^\star(t)$ that solves (P). The following result holds.

\begin{prop}
Let the initial condition $\rho_0$ be such that $C^2(\rho_0) \in [\alpha, \beta]$. Consider also that the time-evolution of 
\begin{equation}\label{Phi}
  \Phi=-2i\sum_{j < k} \Big(\langle \delta_{jk}^{\text{Re}} \rangle_\rho \langle[\delta_{jk}^{\text{Re}}, H_c] \rangle_\rho +\langle \delta_{jk}^{\text{Im}} \rangle_\rho \langle[\delta_{jk}^{\text{Im}}, H_c] \rangle_\rho \Big) 
\end{equation}
is bounded away from zero for every $t\in [t_0, t_f]$ and $\Omega$ is sufficiently large. Then, there exists a nonempty set of admissible input signals $\mathcal{\bar U}([t_0, t_f])$ such that 
$$
\mathcal{\bar U}([t_0, t_f])= \big\{u:[t_0, t_f]\to \mathcal{U}: 
C^2(\rho(t)) \in [\alpha, \beta], \forall t\in [t_0, t_f] \big\}
$$ 
and in particular an optimal $u^\star[t_0, t_f])\in \mathcal{\bar U}([t_0, t_f])$.
\end{prop}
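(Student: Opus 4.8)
The plan is to show first that the coherence measure obeys a control-affine differential equation, then to run a viability (flow-invariance) argument that keeps $C^2(\rho)$ inside $[\alpha,\beta]$, and finally to invoke a standard existence theorem for the resulting constrained minimum-energy problem.

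\emph{Step 1 — the coherence dynamics are affine in $u$.} Differentiating $C^2(\rho)$ from \eqref{eq:C} along \eqref{lindblad}, using $\tfrac{d}{dt}\langle A\rangle_\rho=\Tr(\dot\rho A)$, the identity $\Tr([H,\rho]A)=\langle[A,H]\rangle_\rho$, and the splitting $H=H_0+H_c u$, I would obtain
\[
\frac{d}{dt}C^2(\rho)=\Psi(\rho)+u\,\Phi(\rho),
\]
where $\Phi(\rho)$ is exactly the quantity in \eqref{Phi} — the contribution of the control Hamiltonian $H_c$ — and $\Psi(\rho)$ collects the drift contributions of $H_0$ and of the dissipator $\mathcal{L}(\rho)$. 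Since $\mathcal{D}$ is compact and $\Psi,\Phi$ are polynomial in the entries of $\rho$, we get a uniform bound $\sup_{\rho\in\mathcal{D}}|\Psi(\rho)|=:M<\infty$, while the hypothesis gives $|\Phi(\rho(t))|\ge\delta>0$ along the relevant trajectories.

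\emph{Step 2 — the constraint set is viable.} Let $K:=\{\rho\in\mathcal{D}:C^2(\rho)\in[\alpha,\beta]\}$; it is closed and, since $C^2(\rho_0)\in[\alpha,\beta]$, nonempty. Because the Lindblad flow keeps $\rho(t)\in\mathcal{D}$ for every measurable $u$, it suffices to stop $C^2(\rho(t))$ from leaving $[\alpha,\beta]$. At any state with $C^2(\rho)=\beta$ I would pick $u=-\operatorname{sgn}(\Phi(\rho))\,(M+1)/|\Phi(\rho)|$, which yields $\tfrac{d}{dt}C^2(\rho)\le M-(M+1)=-1<0$; symmetrically, at $C^2(\rho)=\alpha$ the opposite sign gives $\tfrac{d}{dt}C^2(\rho)\ge 1>0$. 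Both choices satisfy $|u|\le (M+1)/\delta$, hence lie in $\Omega$ provided $\Omega\supseteq[-(M+1)/\delta,\,(M+1)/\delta]$ — this is the precise meaning of "$\Omega$ sufficiently large". By a viability/invariance theorem (Nagumo–Aubin), or by checking directly that the piecewise feedback equal to the above expressions near the two faces and zero in the interior generates a Carathéodory (Filippov) solution confined to $K$, one concludes that $\mathcal{\bar U}([t_0,t_f])\neq\emptyset$.

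\emph{Step 3 — existence of an optimal control.} On the nonempty admissible set, $J=\int_{t_0}^{t_f}u^2\,dt$ is coercive, so any minimizing sequence $\{u_n\}$ is bounded in $L^2$ and admits a weakly convergent subsequence $u_n\rightharpoonup u^\star$. Since the dynamics are affine in $u$ with state confined to the compact set $\mathcal{D}$, the trajectories $\rho_n$ converge uniformly (Arzelà–Ascoli together with a Grönwall estimate) to a $\rho^\star$ solving \eqref{lindblad} under $u^\star$; the pointwise-in-$t$ inclusion $C^2(\rho(t))\in[\alpha,\beta]$ is preserved under this uniform limit, so $u^\star\in\mathcal{\bar U}([t_0,t_f])$. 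Weak lower semicontinuity of $u\mapsto\int u^2$ then gives $J(u^\star)\le\liminf_n J(u_n)$, so $u^\star$ is optimal; this is the classical Filippov–Cesari existence argument.

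\emph{Main obstacle.} The delicate step is the viability argument: the boundary feedback enforcing the constraint is discontinuous (it switches sign between the two faces of $K$ and is switched off in the interior), so existence of the closed-loop solution must be handled via a viability theorem with a tangent-cone condition or via Filippov solutions. Moreover, $\Phi$ depends on $\rho$, hence on $u$, so the "bounded away from zero" hypothesis really has to be read along the closed loop, which is precisely why both that hypothesis and the largeness of $\Omega$ relative to the drift bound $M$ are required.
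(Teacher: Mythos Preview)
Your argument is correct and your Step~1 is exactly the paper's proof: they compute $\tfrac{d}{dt}C^2(\rho)$ along \eqref{lindblad}, obtain the affine decomposition $\Phi\,u+\Delta$ (your $\Psi$ is their $\Delta$), and then simply observe that since $\Delta$ is bounded and $\Phi$ is bounded away from zero one can always choose $u$ so as to push $C^2(\rho)$ in the desired direction at either boundary. Your Step~2 is a more careful version of that last sentence --- you quantify the required magnitude $|u|\le(M+1)/\delta$, which makes the phrase ``$\Omega$ sufficiently large'' precise, and you name the viability/Filippov machinery needed to turn the discontinuous boundary feedback into an actual admissible trajectory; the paper leaves all of this implicit.

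Where you genuinely go beyond the paper is Step~3. The statement claims existence of an optimal $u^\star\in\bar{\mathcal U}$, but the paper's proof stops at feasibility and never addresses optimality. Your weak-$L^2$ compactness plus affine-dynamics/Arzel\`a--Ascoli plus lower-semicontinuity argument is the standard and correct way to close that gap; just note that for the weak limit to remain admissible you are tacitly using that $\Omega$ is closed and convex (an interval), so that the pointwise constraint $u(t)\in\Omega$ survives weak convergence via Mazur's lemma. Your ``main obstacle'' remark about the circularity of the hypothesis on $\Phi$ is also well taken and is a subtlety the paper does not flag.
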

\begin{proof}
The proof follows by showing that it is always possible to pick an input signal $u(t)$ that makes $C^2(\rho)$ increase/decrease in time when the coherence is at the lower/upper bound. To show this, we start to compute the evolution of $C^2(\rho)$, which gives
\begin{equation*}
\begin{aligned}
\Tr&(\frac{\partial C^2(\rho)}{\partial t}) =\Tr (\frac{\partial C^2(\rho)}{\partial \rho} \dot{\rho}) \\
&=2 \Tr \big (
(\sum_{j < k} \delta_{jk}^{\text{Re}}\langle \delta_{jk}^{\text{Re}} \rangle_\rho  + 
\delta_{jk}^{\text{Im} }\langle \delta_{jk}^{\text{Im}} \rangle_\rho ) (-i[H,\rho]+\mathcal{L}(\rho)) \big ) \\
&=2\sum_{j < k} \langle \delta_{jk}^{\text{Re}} \rangle_\rho  \Big(\Tr(-i[\delta_{jk}^{\text{Re}}, H] \rho) 
+\Tr (\delta_{jk}^{\text{Re}} \mathcal{L}(\rho))\Big) \\
&\quad+\langle \delta_{jk}^{\text{Im}} \rangle_\rho  (\Tr(-i[\delta_{jk}^{\text{Im}}, H] \rho)
 +\Tr (\delta_{jk}^{\text{Im}} \mathcal{L}(\rho))\\
&=2\sum_{j < k} \langle \delta_{jk}^{\text{Re}} \rangle_\rho  \Big(\Tr(-i[\delta_{jk}^{\text{Re}}, H_0] \rho) + \Tr(-iu(t)[\delta_{jk}^{\text{Re}}, H_c] \rho)\\
&\quad+\Tr (\delta_{jk}^{\text{Re}} \mathcal{L}(\rho))\Big)+\langle \delta_{jk}^{\text{Im}} \rangle_\rho  \big(\Tr(-i[\delta_{jk}^{\text{Im}}, H_0] \rho) \\
&\quad+ \Tr(-iu(t)[\delta_{jk}^{\text{Im}}, H_c] \rho)
+\Tr (\delta_{jk}^{\text{Im}} \mathcal{L}(\rho))\big)
\end{aligned}
\end{equation*}
From which one concludes that
$\Tr(\frac{\partial C^2(\rho)}{\partial t}) = \Phi u + \Delta$, 
where 
\begin{equation*}
\begin{aligned}
 \Delta&=2\sum_{j < k} \langle \delta_{jk}^{\text{Re}} \rangle_\rho  \Tr(-i[\delta_{jk}^{\text{Re}}, H_0] \rho+\delta_{jk}^{\text{Re}} \mathcal{L}(\rho))\\
 &\quad +\langle \delta_{jk}^{\text{Im}} \rangle_\rho \Tr(-i[\delta_{jk}^{\text{Im}}, H_0] \rho+ \delta_{jk}^{\text{Im}} \mathcal{L}(\rho))   
\end{aligned}
\end{equation*}
Thus, the result follows since $\Phi$ and $\Delta$ are bounded and $\Phi$ is bounded away from zero by assumption.
\end{proof}

\begin{remark}
The assumption of $|\Phi|$ being bounded away from zero depends on control Hamiltonian $H_C$. In particular, 
from \eqref{Phi} one can immediately conclude that at least one of the coherence operators must not commute with $H_c$.
\end{remark}

\section{Pontryagin’s Minimum Principle in Gamkrelidze Form}
Let the pair $\left(\rho,u \right)$ be a feasible process, that is, $u(t)$ and $\rho(t)$ satisfy for a.e. $t\in[t_0, t_f]$ the Lindblad dynamics with the stated initial condition, the state and the input constraints. The pair $\big(\rho^{\star},u^{\star} \big)$ is termed as optimal and therefore solves (P) if the value of the cost function $J$ is the minimum possible over the set of all feasible processes. In order to deal with the indicated problem, we indicate two Lagrangian multipliers: $\pi$ and $\mu$, where $\pi:\left[ t_0,t_f \right]\to {{\mathbb{C}}^{N\times N}}$ is the time-varying Lagrange multiplier matrix, whose elements are called the costates of the system, and $\mu$ is a time varying scalar Lagrange multiplier enforcing the coherence constraint. 
To satisfy the coherence state constraints on $C(\rho)$, we define $\Tilde C{(\rho)}$ as the following
\begin{equation}\label{tildec}
\Tilde C{(\rho)}=\left( \begin{matrix}
   {C}^2\left( \rho  \right)-\beta  \\
   \alpha -{C}^2\left( \rho  \right)  \\
\end{matrix} \right)
\end{equation}
such that $\left\{\Tilde C{(\rho)}  \le 0, \forall t\in \left[  t_0,t_f\right] \right\}$. 
We will now derive the necessary optimality conditions.

\begin{prop}
Consider the optimal control problem (P). Let $u^\star(t)$ be an optimal control and $\rho^\star(t)$ be the corresponding state trajectory. Then, there exists a multiplier $\pi^\star(t)$ that, together with $\mu^\star(t)$, satisfies the necessary conditions according to Gamkrelidze form of PMP. Based on the given setup, we express the extended Pontryagin Hamiltonian function as follows:
\begin{equation}\label{PH}
    \begin{aligned}
        \mathcal{H}(\rho,\pi,u, \mu ) = \Tr &\left(\left(\pi + \mu \sum_{j < k} \left( \langle \delta_{jk}^{\text{Re}} \rangle _\rho \delta_{jk}^{\text{Re}} + \langle \delta_{jk}^{\text{Im}} \rangle_\rho \delta_{jk}^{\text{Im}} \right)\right)^\dagger \right. \\
        &\left. 
        \left (
        -\frac{i}{\hbar} [H(t), \rho] + \mathcal{L}(\rho) \right) \right) + u^{2}(t)
    \end{aligned}
\end{equation}
Specifically, we have the condition
\begin{equation*}
    \mathcal{H}(\rho^\star,\pi^\star,u^\star, \mu^\star, t) \le \mathcal{H}(\rho^\star,\pi^\star,u, \mu^\star, t),
\end{equation*}
for all $t \in [t_0,t_f]$, and for all feasible controls $u \in \Omega$. Moreover, the control $u(t)$ must satisfy
\begin{equation}
\begin{aligned}
    \frac{\partial \mathcal{H}}{\partial u} = \frac{-i}{\hbar} \Tr &\Bigg( \Big( \pi^\star + \mu \sum_{j<k} \left( \langle \delta_{jk}^{\text{Re}} \rangle_{\rho^\star}  \delta_{jk}^{\text{Re}} + \langle \delta_{jk}^{\text{Im}} \rangle_{\rho^\star}  \delta_{jk}^{\text{Im}} \right) \Big)^\dagger \\
    & [H_C, \rho^\star] \Bigg) + 2 u(t)=0
\end{aligned}    
\end{equation}
The adjoint equation is derived as follows:
\begin{equation}\label{adj}
\begin{aligned}
\dot{\pi}^\star &= -\frac{\partial \mathcal{H}}{\partial \rho} = \chi\big( \pi^\dagger \big) + \mu \sum\limits_{j<k} \big( \delta_{jk}^{\text{Re}} \langle \chi({\delta_{jk}^{\text{Re}}}^\dagger) \rangle_{\rho^\star} \\
   &+ \delta_{jk}^{\text{Im}} \langle \chi({\delta_{jk}^{\text{Im}}}^\dagger) \rangle_{\rho^\star} 
    + \langle \delta_{jk}^{\text{Re}} \rangle_{\rho^\star} \chi({\delta_{jk}^{\text{Re}}}^\dagger) 
   + \langle \delta_{jk}^{\text{Im}} \rangle_{\rho^\star} \chi({\delta_{jk}^{\text{Im}}}^\dagger) \big),
\end{aligned}
\end{equation}
and its final condition implies that $\pi (t_f)=0_{N \times N}$.
The function $\chi$ for a generic variable $\phi$ is defined as
\begin{equation}\label{chi}
\chi(\phi) = \frac{-i}{\hbar}[\phi, H(t)] + \mathcal{L}(\phi)+\mathcal{L}_0(\phi),
\end{equation}
with $\mathcal{L}_0(\phi)=
\sum\limits_{k} \gamma_k (L_k ^{\dagger } \phi L_{k}- L_k \phi L_{k}^{\dagger})$.
\end{prop}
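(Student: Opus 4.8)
The plan is to apply the Gamkrelidze form of Pontryagin's Minimum Principle to the state-constrained problem $(P)$, treating the coherence bounds encoded by $\Tilde C(\rho)$ in \eqref{tildec} as pointwise state constraints. First I would set up the augmented system in the standard Gamkrelidze way: introduce the costate matrix $\pi^\star(t)$ conjugate to the Lindblad dynamics and the scalar multiplier $\mu^\star(t)$ associated with the active branch of $\Tilde C(\rho)$, and form the extended Pontryagin--Hamiltonian \eqref{PH}. The key observation that makes \eqref{PH} take the stated form is that the gradient of the scalar constraint $C^2(\rho)$ with respect to $\rho$ is exactly $\sum_{j<k}\big(\langle \delta_{jk}^{\text{Re}}\rangle_\rho\,\delta_{jk}^{\text{Re}} + \langle \delta_{jk}^{\text{Im}}\rangle_\rho\,\delta_{jk}^{\text{Im}}\big)$ — this follows by differentiating the square of \eqref{eq:C} and using $\langle\delta\rangle_\rho=\Tr(\rho\,\delta)$, so that each $(\langle\delta\rangle_\rho)^2$ contributes $2\langle\delta\rangle_\rho\,\delta$ under $\partial/\partial\rho$. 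Hence the effective costate appearing in Gamkrelidze's principle is the ``corrected'' multiplier $\pi + \mu\,\partial C^2/\partial\rho$, which is what sits inside the trace in \eqref{PH}.

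Next I would extract the three necessary conditions. The minimality (Weierstrass) condition is immediate: since $u$ enters $\mathcal H$ only through the affine term $-\tfrac{i}{\hbar}\Tr\big((\cdots)^\dagger [H_C u,\rho]\big)$ plus the convex term $u^2$, $\mathcal H$ is strictly convex in $u$, giving $\mathcal H(\rho^\star,\pi^\star,u^\star,\mu^\star,t)\le \mathcal H(\rho^\star,\pi^\star,u,\mu^\star,t)$ for admissible $u\in\Omega$, and the stationarity condition $\partial\mathcal H/\partial u = 0$ reduces to the displayed equation after writing $H_C = H_c u$ and differentiating. For the adjoint equation \eqref{adj}, I would compute $-\partial\mathcal H/\partial\rho$ term by term. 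The contribution of $\Tr(\pi^\dagger(-\tfrac{i}{\hbar}[H,\rho]+\mathcal L(\rho)))$ produces, after moving the derivative onto $\pi$ via the cyclic property of the trace and the adjoints of the commutator and dissipator, the operator $\chi(\pi^\dagger)$ with $\chi$ as in \eqref{chi} — note the extra $\mathcal L_0$ term arises precisely because the adjoint of the Lindblad dissipator $\mathcal L$ is $\mathcal L + \mathcal L_0$ in the relevant pairing (the ``jump'' part $L_k\rho L_k^\dagger$ transposes to $L_k^\dagger\phi L_k$ while the anticommutator part is self-adjoint, and collecting terms gives the stated $\mathcal L_0$). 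The $\mu$-dependent terms require differentiating the bilinear expression $\sum_{j<k}(\langle\delta^{\text{Re}}_{jk}\rangle_\rho\,\Tr(\delta^{\text{Re}}_{jk}\,\dot\rho\text{-generator}) + \cdots)$; since each such term is quadratic in $\rho$ through the pairing of $\langle\delta\rangle_\rho$ with the generator acting on the other $\delta$, the product rule yields the four-term sum in \eqref{adj}. Finally, the transversality condition $\pi^\star(t_f)=0_{N\times N}$ follows because the cost $J$ has no terminal (Mayer) term and there is no terminal state constraint.

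The main obstacle I anticipate is the careful bookkeeping of adjoints in the dissipative (non-Hermitian, non-unital) setting: the Lindblad generator $\mathcal L$ is not self-adjoint with respect to the Hilbert--Schmidt inner product, and one must track which factor the derivative lands on when expanding $\Tr(\pi^\dagger\mathcal L(\rho))$ and $\Tr(\delta\,\mathcal L(\rho))$. Getting the sign of the commutator term and, especially, correctly identifying the ``extra'' piece $\mathcal L_0(\phi)=\sum_k\gamma_k(L_k^\dagger\phi L_k - L_k\phi L_k^\dagger)$ that distinguishes $\chi$ from the ordinary Lindblad adjoint is where errors are most likely. A secondary subtlety is the Gamkrelidze framework itself: one should note that $\mu^\star(t)\ge 0$ on each active constraint branch, is supported on the contact set $\{t: \Tilde C(\rho^\star(t))=0\}$, and — in the Gamkrelidze form as opposed to the measure-multiplier form — is absolutely continuous with $\dot\mu^\star$ governed by the total time derivative of the active constraint along the optimal trajectory; I would state these complementary-slackness and regularity properties explicitly and invoke, rather than reprove, the underlying Gamkrelidze maximum principle, so that the proof reduces to verifying that the specific operators $\chi$, $\mathcal L_0$, and the corrected costate $\pi+\mu\,\partial C^2/\partial\rho$ are the correct instantiations of the general objects in that principle.
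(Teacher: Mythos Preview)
Your proposal is correct and follows essentially the same route as the paper: compute $\partial C^2(\rho)/\partial\rho$ to build the Gamkrelidze-corrected costate, form the extended Hamiltonian \eqref{PH}, and then read off the minimality, stationarity, adjoint, and transversality conditions. The paper's proof differs only in presentation---it introduces the pair $(\mu_1,\mu_2)$ for the two rows of $\tilde C(\rho)$ and sets $\mu=2(\mu_1-\mu_2)$, then carries out the adjoint computation by explicitly splitting $\mathcal H$ into four pieces $\mathcal H_1,\dots,\mathcal H_4$ (the $\pi$-commutator, $\pi$-dissipator, $\mu$-commutator, and $\mu$-dissipator terms) and differentiating each with respect to $\rho$, arriving at exactly the $\chi$ and $\mathcal L_0$ you identify.
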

\begin{proof}
We follow the necessary optimality conditions of the PMP in the form of Gamkrelidze for state-constrained optimal control problems, as studied in \cite{arutyunov2011maximum}. We extend these arguments to the quantum context for matrix-valued dynamics. 
To this end, the extended Pontryagin Hamiltonian is defined as:
\begin{equation*}
\mathcal{H}(\rho, \pi, u, \mu, t) = \Tr\Big(\pi^\dagger \dot{\rho} + (\bm{\mu}^T \frac{\partial \tilde{C}(\rho)}{\partial \rho})^\dagger \dot{\rho}\Big) + u^2(t),
\end{equation*}
where $
\bm{\mu}^T \frac{\partial \tilde{C}(\rho)}{\partial \rho} = \left( \mu_1 \quad \mu_2 \right) \begin{pmatrix} \frac{\partial C^2(\rho)}{\partial \rho} \\ - \frac{\partial C^2(\rho)}{\partial \rho} \end{pmatrix}
$, simplifying to
$$
(\mu_1 - \mu_2) \frac{\partial C^2(\rho)}{\partial \rho} = 2(\mu_1 - \mu_2) \sum_{j < k} \left( \langle \delta_{jk}^{\text{Re}} \rangle_\rho \delta_{jk}^{\text{Re}} + \langle \delta_{jk}^{\text{Im}} \rangle_\rho \delta_{jk}^{\text{Im}} \right)
$$ From this, we define $\mu = 2(\mu_1 - \mu_2)$. Thus, the Pontryagin Hamiltonian is formulated as shown in equation \eqref{PH}, from which the necessary conditions according to the PMP are derived. In the following, we show the derivation of the adjoint equation \eqref{adj}. Due to the complexity of computation, we break $\mathcal{H}$ into four main terms, such that
$\mathcal{H}=\mathcal{H}_1+\mathcal{H}_2+\mathcal{H}_3+\mathcal{H}_4$, where
\begin{equation*}
\mathcal{H}_1=-\frac{i}{\hbar}\Tr(\pi^\dagger([H,\rho]))=-\frac{i}{\hbar}\big(\Tr(\pi^\dagger H\rho)-\Tr(\pi^\dagger \rho H)\big)
\end{equation*}
\begin{equation*}
\begin{aligned}
\mathcal{H}_2&=\Tr(\pi^\dagger \mathcal{L}(\rho))=\sum\limits_{k} \gamma_k \big(\Tr(\pi^\dagger L_k \rho L_k^\dagger)-\frac{1}{2} \Tr(\pi^\dagger L_k^\dagger L_k \rho)\\
&\quad-\frac{1}{2} \Tr(\pi^\dagger \rho L_k^\dagger L_k)\big)
\end{aligned}
\end{equation*} 

\begin{equation*}
\begin{aligned}
&\mathcal{H}_3=  \mu  \Tr( ( \sum_{j<k}  \langle \delta_{jk}^{\text{Re}} \rangle_{\rho}  {\delta_{jk}^{\text{Re}}} ^\dagger + \langle \delta_{jk}^{\text{Im}} \rangle_{\rho}  {\delta_{jk}^{\text{Im}}}^\dagger ) (-\frac{i}{\hbar}[H,\rho]))\\
&=\frac{-i \mu}{\hbar} \sum_{j<k} \langle \delta_{jk}^{\text{Re}} \rangle_{\rho} (\Tr({\delta_{jk}^{\text{Re}}} ^\dagger H \rho)- \Tr({\delta_{jk}^{\text{Re}}} ^\dagger \rho H)) + \\
&\langle \delta_{jk}^{\text{Im}} \rangle_{\rho} (\Tr({\delta_{jk}^{\text{Im}}} ^\dagger H \rho)- \Tr({\delta_{jk}^{\text{Im}}} ^\dagger \rho H))
\end{aligned}
\end{equation*}
\begin{equation*}
\begin{aligned}
 &\mathcal{H}_4= \mu \Tr (\sum_{j<k} \left( \langle \delta_{jk}^{\text{Re}} \rangle _\rho {\delta_{jk}^{\text{Re}}}^\dagger + \langle \delta_{jk}^{\text{Im}} \rangle_\rho {\delta_{jk}^{\text{Im}}}^\dagger \right)\mathcal{L}(\rho))= \\
&\mu  \sum_{j<k}  \langle \delta_{jk}^{\text{Re}} \rangle _\rho  
\sum_k \gamma_k  \left (\Tr({\delta_{jk}^{\text{Re}}}^\dagger L_k \rho L_k^\dagger)
- \frac{1}{2} \Tr({\delta_{jk}^{\text{Re}}}^\dagger L_k^\dagger L_k \rho) \right. \\
&  \left. - \frac{1}{2} \Tr( {\delta_{jk}^{\text{Re}}}^\dagger \rho L_k^\dagger L_k )\right) + \mu  \sum_{j<k}  \langle \delta_{jk}^{\text{Im}} \rangle _\rho   \sum_k \gamma_k  \left (\Tr({\delta_{jk}^{\text{Im}}}^\dagger L_k \rho L_k^\dagger)\right.\\
&- \frac{1}{2} \Tr({\delta_{jk}^{\text{Im}}}^\dagger L_k^\dagger L_k \rho)    \left. - \frac{1}{2} \Tr( {\delta_{jk}^{\text{Im}}}^\dagger \rho L_k^\dagger L_k )\right)
\end{aligned}
\end{equation*}
Now, we compute $\frac{\partial \mathcal{H}}{\partial \rho}$ by computing the derivative of all above four terms.
\begin{equation*}
    \frac{\partial \mathcal{H}_1}{\partial \rho}= -\frac{i}{\hbar}(\pi ^\dagger H -H \pi ^\dagger)=-\frac{i}{\hbar}[\pi^ \dagger , H]
\end{equation*}
\begin{equation*}
    \begin{aligned}
        & \frac{\partial \mathcal{H}_2}{\partial \rho}=\sum\limits_{k} \gamma_k (L_k^\dagger \pi^\dagger L_k-\frac{1}{2} \pi^\dagger L_k^\dagger L_k -\frac{1}{2} L_k^\dagger L_k \pi^\dagger +L_k\pi^\dagger L_k^\dagger\\
&-L_k\pi^\dagger L_k^\dagger)=\mathcal{L}(\pi^\dagger)+\mathcal{L}_0(\pi^\dagger)
    \end{aligned}
\end{equation*}
{\small \begin{equation*}
    \begin{aligned}
        &\frac{\partial \mathcal{H}_3}{\partial \rho}=\frac{-i \mu}{\hbar} \sum_{j<k} \delta_{jk}^{\text{Re}} (\Tr({\delta_{jk}^{\text{Re}}} ^\dagger H \rho)- \Tr({\delta_{jk}^{\text{Re}}} ^\dagger \rho H)) + \\
&\delta_{jk}^{\text{Im}} (\Tr({\delta_{jk}^{\text{Im}}} ^\dagger H \rho)- \Tr({\delta_{jk}^{\text{Im}}} ^\dagger \rho H))+ \langle \delta_{jk}^{\text{Re}} \rangle_{\rho} ({\delta_{jk}^{\text{Re}}} ^\dagger H - H{\delta_{jk}^{\text{Re}}} ^\dagger) \\
&+\langle \delta_{jk}^{\text{Im}} \rangle_{\rho} ({\delta_{jk}^{\text{Im}}} ^\dagger H- H{\delta_{jk}^{\text{Im}}} ^\dagger)= \frac{-i \mu}{\hbar} \sum_{j<k} \delta_{jk}^{\text{Re}} \Tr([{\delta_{jk}^{\text{Re}}} ^\dagger, H] \rho)\\
&+\delta_{jk}^{\text{Im}} \Tr([{\delta_{jk}^{\text{Im}}} ^\dagger, H] \rho)
+\langle \delta_{jk}^{\text{Re}} \rangle_{\rho} [{\delta_{jk}^{\text{Re}}} ^\dagger, H ]
+\langle \delta_{jk}^{\text{Im}} \rangle_{\rho} [{\delta_{jk}^{\text{Im}}} ^\dagger, H ]
 \end{aligned}
\end{equation*}}
\small{\begin{equation*}
\begin{aligned}
\frac{\partial \mathcal{H}_4}{\partial \rho}&= 
\mu  \sum_{j<k}  \delta_{jk}^{\text{Re}}    
\sum_k \gamma_k  \big (\Tr({\delta_{jk}^{\text{Re}}}^\dagger L_k \rho L_k^\dagger)
- \frac{1}{2} \Tr({\delta_{jk}^{\text{Re}}}^\dagger L_k^\dagger L_k \rho)  \\
&   - \frac{1}{2} \Tr( {\delta_{jk}^{\text{Re}}}^\dagger \rho L_k^\dagger L_k )\big) + \mu  \sum_{j<k}   \delta_{jk}^{\text{Im}}   \sum_k \gamma_k  \big (\Tr({\delta_{jk}^{\text{Im}}}^\dagger L_k \rho L_k^\dagger)\\
&- \frac{1}{2} \Tr({\delta_{jk}^{\text{Im}}}^\dagger L_k^\dagger L_k \rho)   - \frac{1}{2} \Tr( {\delta_{jk}^{\text{Im}}}^\dagger \rho L_k^\dagger L_k )\big) 
+\mu  \sum_{j<k}  \langle \delta_{jk}^{\text{Re}} \rangle _\rho   \\  
&\sum_k \gamma_k  \big (\Tr(L_k^\dagger{\delta_{jk}^{\text{Re}}}^\dagger L_k  )
- \frac{1}{2} \Tr({\delta_{jk}^{\text{Re}}}^\dagger L_k^\dagger L_k)  - \frac{1}{2} \Tr(L_k^\dagger L_k {\delta_{jk}^{\text{Re}}}^\dagger)\big)\\
&  +\mu  \sum_{j<k}  \langle \delta_{jk}^{\text{Im}} \rangle _\rho   \sum_k \gamma_k  \big (\Tr(L_k^\dagger{\delta_{jk}^{\text{Im}}}^\dagger L_k )- \frac{1}{2} \Tr({\delta_{jk}^{\text{Im}}}^\dagger L_k^\dagger L_k )  \\
&- \frac{1}{2} \Tr(L_k^\dagger  L_k {\delta_{jk}^{\text{Im}}}^\dagger \rho  )\big) \\
&=\mu \big( \sum_{j<k}  \delta_{jk}^{\text{Re}} \big(\Tr(\mathcal{L}({\delta_{jk}^{\text{Re}}}^\dagger)\rho ) +\big( 
 \Tr(\mathcal{L}_0({\delta_{jk}^{\text{Re}}}^\dagger) \rho )\big)\\
 &\quad+\sum_{j<k}  \delta_{jk}^{\text{Im}} \big(\Tr(\mathcal{L}({\delta_{jk}^{\text{Im}}}^\dagger  )\rho) +\Tr(\mathcal{L}_0({\delta_{jk}^{\text{Im}}}^\dagger)\rho) \big)\\
&\quad+ \sum_{j<k} \langle \delta_{jk}^{\text{Re}} \rangle _\rho  \big( \mathcal{L}({\delta_{jk}^{\text{Re}}}^\dagger) +\mathcal{L}_0({\delta_{jk}^{\text{Re}}}^\dagger)\big) \\
&\quad+\sum_{j<k} \langle \delta_{jk}^{\text{Im}} \rangle _\rho  \big( \mathcal{L}({\delta_{jk}^{\text{Im}}}^\dagger) +  
\mathcal{L}_0({\delta_{jk}^{\text{Im}}}^\dagger)\big)\big)
\end{aligned}
\end{equation*}}
By summing the four derivatives, we compute the adjoint equation $-\dot{\pi}=\frac{\partial \mathcal{H}}{\partial \rho}=\frac{\partial \mathcal{H}_1}{\partial \rho}+\frac{\partial \mathcal{H}_2}{\partial \rho}+\frac{\partial \mathcal{H}_3}{\partial \rho}+\frac{\partial \mathcal{H}_4}{\partial \rho}$
\small{\begin{equation*}
    \begin{aligned}
        -&\dot{\pi}= -\frac{i}{\hbar}[\pi^ \dagger , H] +\mathcal{L}(\pi^\dagger)+\mathcal{L}_0(\pi^\dagger) \\
        &-\frac{i \mu}{\hbar} \sum_{j<k} \delta_{jk}^{\text{Re}} \Tr([{\delta_{jk}^{\text{Re}}} ^\dagger, H] \rho) 
       \\
       &+\mu  \sum_{j<k}  \delta_{jk}^{\text{Re}} \left(\Tr(\mathcal{L}({\delta_{jk}^{\text{Re}}}^\dagger)\rho)  \right.
+\mu  \sum_{j<k}  \delta_{jk}^{\text{Re}} \Tr(\mathcal{L}_0({\delta_{jk}^{\text{Re}}}^\dagger)\rho ) \\
&-\frac{i \mu}{\hbar} \sum_{j<k} \langle \delta_{jk}^{\text{Re}} \rangle_{\rho} [{\delta_{jk}^{\text{Re}}} ^\dagger, H ]
+\mu \sum_{j<k} \langle \delta_{jk}^{\text{Re}} \rangle _\rho  \left( \mathcal{L}({\delta_{jk}^{\text{Re}}}^\dagger) +\mathcal{L}_0({\delta_{jk}^{\text{Re}}}^\dagger)\right) \\
       &-\frac{i \mu}{\hbar} \sum_{j<k} \delta_{jk}^{\text{Im}} \Tr([{\delta_{jk}^{\text{Im}}} ^\dagger, H] \rho) + \mu  \sum_{j<k}  \delta_{jk}^{\text{Re}} \left(\Tr(\mathcal{L}({\delta_{jk}^{\text{Im}}}^\dagger)\rho)  \right.
       \\
       &
+\mu  \sum_{j<k}  \delta_{jk}^{\text{Im}} \Tr(\mathcal{L}_0({\delta_{jk}^{\text{Re}}}^\dagger)\rho)  -
\frac{i \mu}{\hbar} \sum_{j<k} \langle \delta_{jk}^{\text{Im}} \rangle_{\rho} [{\delta_{jk}^{\text{Im}}} ^\dagger, H ]\\
&
+\mu \sum_{j<k} \langle \delta_{jk}^{\text{Im}} \rangle _\rho  \left( \mathcal{L}({\delta_{jk}^{\text{Re}}}^\dagger) +\mathcal{L}_0({\delta_{jk}^{\text{Re}}}^\dagger)\right)
\end{aligned}
\end{equation*}}
which simplifies to
\begin{equation*}
    \begin{aligned}
        -\dot{\pi} = &-\frac{i}{\hbar}[\pi^ \dagger , H] +\mathcal{L}(\pi^\dagger)+\mathcal{L}_0(\pi^\dagger)\\
       &+\mu \sum_{j<k} \left ( \delta_{jk}^{\text{Re}} 
       \langle   -\frac{i}{\hbar}[{\delta_{jk}^{\text{Re}}} ^\dagger, H] \right.+\mathcal{L}({\delta_{jk}^{\text{Re}}}^\dagger) +\mathcal{L}_0({\delta_{jk}^{\text{Re}}}^\dagger)\rangle_\rho \\&+\delta_{jk}^{\text{Im}} 
       \langle-\frac{i}{\hbar}[{\delta_{jk}^{\text{Im}}} ^\dagger, H]+\mathcal{L}({\delta_{jk}^{\text{Im}}}^\dagger) 
    +\mathcal{L}_0({\delta_{jk}^{\text{Im}}}^\dagger)\rangle_\rho\\
       &+\langle \delta_{jk}^{\text{Re}} \rangle_{\rho} \left( -\frac{i}{\hbar}  [{\delta_{jk}^{\text{Re}}} ^\dagger, H ]+\mathcal{L}({\delta_{jk}^{\text{Re}}}^\dagger) +\mathcal{L}_0({\delta_{jk}^{\text{Re}}}^\dagger)\right)\\
       &+\left. \langle \delta_{jk}^{\text{Im}} \rangle_{\rho}  \left( -\frac{i}{\hbar}  [{\delta_{jk}^{\text{Im}}} ^\dagger, H ]+\mathcal{L}({\delta_{jk}^{\text{Im}}}^\dagger) +\mathcal{L}_0({\delta_{jk}^{\text{Im}}}^\dagger) \right)  \right)      
    \end{aligned}
\end{equation*} 
By defining the function \eqref{chi}, the adjoint equation \eqref{adj} is proved. In addition, the transversality condition, which applies at the final time $t_f$, imposes that
\begin{equation*}
\pi(t_f) = -\frac{\partial J}{\partial \rho_f} = 0_{N \times N}.
\end{equation*}
\end{proof}
\begin{prop}
Let the quadruple $(\rho^\star, \pi^\star, u^\star,\mu^\star)$ be an optimal trajectory of problem (P) for the case where the state constraints are not active. Then,  this quadruple is a local minimizer of the extended Pontryagin Hamiltonian \eqref{PH}.
\end{prop}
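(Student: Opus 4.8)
The plan is to read the statement as asserting that, along the optimal quadruple, the map $u\mapsto\mathcal{H}(\rho^\star(t),\pi^\star(t),u,\mu^\star(t))$ attains a (strict) local minimum at $u=u^\star(t)$ for almost every $t$, and to obtain this in two moves. First I would invoke complementary slackness from the Gamkrelidze form of the PMP: since by hypothesis $\tilde C(\rho^\star(t))<0$ on $[t_0,t_f]$, the conditions of \cite{arutyunov2011maximum} adapted to the matrix setting (as in the proof of Proposition~2) force the constraint multipliers $\mu_1^\star,\mu_2^\star$ to vanish, hence $\mu^\star(t)=2(\mu_1^\star(t)-\mu_2^\star(t))=0$ for a.e. $t$. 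Plugging $\mu^\star\equiv0$ into \eqref{PH} collapses the extended Pontryagin Hamiltonian to
\[
\mathcal{H}(\rho^\star,\pi^\star,u,0)=\Tr\!\left(\pi^{\star\dagger}\!\left(-\tfrac{i}{\hbar}[H_0+H_c u,\rho^\star]+\mathcal{L}(\rho^\star)\right)\right)+u^{2},
\]
and, collecting the $u$-dependence (which enters only linearly through $[H_c u,\rho^\star]$ and quadratically through the cost), this is an explicit scalar quadratic $\mathcal{H}(u)=u^{2}+a(t)\,u+c(t)$ with $a(t)=-\tfrac{i}{\hbar}\Tr(\pi^{\star\dagger}(t)[H_c,\rho^\star(t)])$ and $c(t)$ independent of $u$.

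Second, I would verify the first- and second-order conditions for this quadratic. The first-order condition is nothing but the stationarity identity already established in Proposition~2, which with $\mu^\star=0$ reads $\partial\mathcal{H}/\partial u|_{u^\star}=2u^\star(t)+a(t)=0$; more importantly, Proposition~2 already gives the global PMP minimum inequality $\mathcal{H}(\rho^\star,\pi^\star,u^\star,\mu^\star)\le\mathcal{H}(\rho^\star,\pi^\star,u,\mu^\star)$ over $u\in\Omega$. The second-order (strengthened Legendre--Clebsch) condition is immediate: $\partial^{2}\mathcal{H}/\partial u^{2}=2>0$, uniformly in $t$. Since a quadratic in one real variable with positive leading coefficient is strictly convex, $u^\star(t)$ is its unique minimizer over any interval; when $u^\star(t)$ lies in the interior of $\Omega$ this is a genuine interior strict local minimizer, and when it lies on $\partial\Omega$ the strict convexity together with the PMP inequality still identifies $u^\star(t)$ as the unique minimizer over $\Omega$. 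Holding $\rho^\star,\pi^\star,\mu^\star$ at their (here: optimal, resp. zero) values, this says precisely that the quadruple $(\rho^\star,\pi^\star,u^\star,\mu^\star)$ is a (strict) local minimizer of the extended Pontryagin Hamiltonian, which is the claim.

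The main obstacle is the first move: rigorously justifying that inactivity of the coherence constraint annihilates the multiplier. In the Gamkrelidze framework the constraint multiplier is absolutely continuous and acts through the augmented costate $\pi^\star+\mu^\star\,\partial\tilde C/\partial\rho$, so one must combine the sign condition $\mu^\star(t)\ge 0$, the complementary slackness $\langle\mu^\star(t),\tilde C(\rho^\star(t))\rangle=0$, and the strict feasibility $\tilde C(\rho^\star(t))<0$ to conclude $\mu^\star(t)=0$ a.e.; everything afterwards is the elementary remark that a monic scalar quadratic is strictly convex. I would also note, as the practically relevant consequence, that in this unconstrained regime $\mathcal{H}$ is convex in $u$ (and affine in $\rho$), so the necessary conditions of Proposition~2 are in fact sufficient for the pointwise control update, which is why the numerical scheme in Section~V is well posed when the coherence bounds are not hit.
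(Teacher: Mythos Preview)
Your argument is correct, but it is considerably more elaborate than the paper's. The paper's entire proof is the single observation that $\partial^{2}\mathcal{H}/\partial u^{2}=2>0$. This already suffices because in \eqref{PH} the control $u$ enters only through $H(t)=H_0+H_c\,u$ inside the commutator, which is linear in $u$, and through the running cost $u^{2}$; hence $\mathcal{H}$ is a monic scalar quadratic in $u$ \emph{regardless of the value of $\mu^\star$}. Your first move---invoking complementary slackness from the Gamkrelidze PMP to force $\mu^\star\equiv 0$ when the constraint is inactive---is a valid deduction and a reasonable reading of the hypothesis, but it is not needed for the conclusion: the second derivative is $2$ whether or not $\mu^\star$ vanishes, so the paper does not use the inactivity hypothesis at all in its proof. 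What your approach buys is a cleaner interpretation (the extended Hamiltonian reduces to the ordinary one, and the stationarity condition of Proposition~2 becomes the unconstrained first-order condition), plus the useful practical remark about well-posedness of the numerical update; what the paper's approach buys is brevity and the slightly stronger observation that strict convexity in $u$ holds even on boundary arcs.
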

\begin{proof}
This follows from the fact that the second derivative of the extended Pontryagin Hamiltonian with respect to the control $u$ is $\frac{\partial^2 \mathcal{H}}{\partial u^2} =2>0$.
\end{proof}

\begin{prop}
Consider problem (P) where the state constraints are active for some period of time $\tau \in [t_0,t_f]$ and let $\bar\rho(t)$, $t\in\tau$ be the corresponding density matrix for each $C^2(\bar\rho)=\alpha$ (or $C^2(\bar\rho)=\beta$ for the upper bound case). Then, as long as 
the control Hamiltonian $H_c$ does not commute with $\bar\rho$,
there exists a feasible boundary control $u(t)$ for $t\in\tau$  given by 
\begin{equation} \label{eq:u_const}
u(t)= \frac{\big(-\frac{i}{\hbar} [H_0, \rho] + \mathcal{L}(\rho)\big)}{\frac{i}{\hbar}[H_c, \rho]}.      
\end{equation}
\end{prop}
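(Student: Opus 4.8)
The plan is to read the displayed control as the value of $u$ that renders the state trajectory stationary (hence the coherence frozen) along the active arc. First I would note that on $\tau$ we have $C^2(\bar\rho(t)) \equiv \alpha$ (resp.\ $\beta$), a constant, so $\frac{d}{dt}C^2(\bar\rho(t)) = 0$ for all $t\in\tau$. Invoking the identity already obtained in the proof of Proposition~1, namely $\frac{d}{dt}C^2(\rho) = \Phi\,u + \Delta$ with $\Phi$ as in \eqref{Phi}, this requirement is equivalent to the scalar relation $\Phi(t)\,u(t) + \Delta(t) = 0$, uniquely solvable for $u(t)$ whenever $\Phi(t)\neq 0$. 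The non-degeneracy hypothesis of Proposition~1 together with $[H_c,\bar\rho]\neq 0$ --- which prevents the control from decoupling from the coherence dynamics, see the Remark after Proposition~1 --- is what I would invoke to guarantee $\Phi(t)\neq 0$; this then yields a measurable, bounded $u$, hence admissible when $\Omega$ is large enough, exactly as in Proposition~1.

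Next I would recover the compact expression \eqref{eq:u_const}. A sufficient (though not necessary) way to keep $C^2$ frozen on $\tau$ is to freeze the whole state, i.e.\ to impose $\dot{\bar\rho}(t)=0$. Substituting $H = H_0 + u H_c$ into \eqref{lindblad} gives
\begin{equation*}
0 = -\frac{i}{\hbar}[H_0,\bar\rho] - \frac{i}{\hbar}\,u(t)\,[H_c,\bar\rho] + \mathcal{L}(\bar\rho),
\end{equation*}
and isolating $u(t)$ produces exactly \eqref{eq:u_const}. Here the hypothesis that $H_c$ does not commute with $\bar\rho$, i.e.\ $[H_c,\bar\rho]\neq 0$, is precisely what makes the isolation step meaningful: were $[H_c,\bar\rho]=0$, the control would disappear from $\dot{\bar\rho}$ and no boundary value of $u$ could hold the state on the constraint surface.

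The hard part will be the over-determination built into \eqref{eq:u_const}: $\dot{\bar\rho}=0$ is an $N\times N$ matrix equation whereas $u$ is a scalar, so taken literally it forces $-\frac{i}{\hbar}[H_0,\bar\rho]+\mathcal{L}(\bar\rho)$ to be a scalar multiple of $\frac{i}{\hbar}[H_c,\bar\rho]$. I would resolve this by weakening $\dot{\bar\rho}=0$ to the genuinely necessary condition $\frac{d}{dt}C^2(\bar\rho)=0$: contracting the Lindblad right-hand side against $\sum_{j<k}\big(\langle\delta_{jk}^{\text{Re}}\rangle_{\bar\rho}\,\delta_{jk}^{\text{Re}} + \langle\delta_{jk}^{\text{Im}}\rangle_{\bar\rho}\,\delta_{jk}^{\text{Im}}\big)$ under the trace collapses the matrix identity to the scalar identity $\Phi u + \Delta = 0$, and \eqref{eq:u_const} is then to be read as this scalar identity in quotient form, with numerator and denominator understood after this pairing. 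It then remains only to check --- as in Proposition~1, using boundedness of $\Phi$ and $\Delta$ and $\Phi$ bounded away from zero --- that the resulting $u$ lies in $L_\infty$, that $u(t)\in\Omega$ for $\Omega$ sufficiently large, and that the associated $\bar\rho$ indeed stays on $\{C^2=\alpha\}$ (resp.\ $\{C^2=\beta\}$) throughout $\tau$, which closes the argument.
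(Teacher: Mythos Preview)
Your proposal is correct and follows essentially the same route as the paper: both impose $\Gamma=\Tr\big(\bm{\mu}^{T}\frac{\partial\tilde C(\rho)}{\partial\rho}\,\dot\rho\big)=0$ along the active arc --- which is exactly your condition $\frac{d}{dt}C^{2}(\bar\rho)=0$, i.e.\ $\Phi\,u+\Delta=0$ --- and then solve for $u$. Your treatment is in fact more careful than the paper's own proof, since you explicitly flag and resolve the matrix-versus-scalar over-determination in \eqref{eq:u_const} that the paper leaves implicit.
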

\begin{proof}
By imposing $\Gamma=\Tr\big (\bm{\mu}^T \frac{\partial \tilde{C}(\rho)}{\partial \rho}\dot{\rho}\big)$ equal to zero, we compute the control $u$ that keeps the lower bound of state constraint, which is  
\begin{equation*}
\Gamma= 2\mu \sum_{j < k} \big( \langle \delta_{jk}^{\text{Re}} \rangle_\rho \delta_{jk}^{\text{Re}} + \langle \delta_{jk}^{\text{Im}} \rangle_\rho \delta_{jk}^{\text{Im}} \big) \big(-\frac{i}{\hbar} [H(t), \rho] + \mathcal{L}(\rho)\big)
\end{equation*}
From this it follows \eqref{eq:u_const}.
\end{proof}

\begin{algorithm}\label{alg1}
\small
\caption{Solving the State-Constrained Quantum Optimal Control Problem with Coherence Preservation.}
\begin{algorithmic}[1]
    \State \textbf{Step 1 - Initialization}
    \State Set number of time steps $N$ and initialize the iteration counter $j = 0$.
    \For{$m = 0,\dots, N-1$}
        \State Initialize control $u_m^j$.
        \State Initialize multiplier $\mu_m^j$.
    \EndFor
    \State Set initial state $\rho_0$ at time $t_0$.
    \State Set the adjoint variable $\pi_f$ at final time $t_f$ according to the transversality condition. 
    \State Set convergence tolerance $\varepsilon$, and learning rates $\eta_i$, $\eta_d$.
    
    \State \textbf{Step 2 - Computation of the state trajectory}
    \For{$m = 0,\dots,N-1$}
        \State Update the state $\rho_m^j$ by solving the Lindblad master equation numerically using a time-stepping method 
        \[
        \dot{\rho}_m^j = -\frac{i}{\hbar} [H_m^j, \rho_m^j] + \mathcal{L}(\rho_m^j)
        \]
    \EndFor

    \State \textbf{Step 3 - Computation of the adjoint trajectory}
    \For{$m = N-1$ \textbf{to} $0$}
        \State Update the adjoint variable $\pi_m^j$ by solving the adjoint equation numerically using the chosen time-stepping method 
        \[
        \pi_m^j = \chi(\pi^\dagger_m) + \mu_m \sum_{j<k} \left( \delta_{jk}^{Re} \langle \chi(\delta_{jk}^{Re}) \rangle_{\rho_m^j} + \delta_{jk}^{Im} \langle \chi(\delta_{jk}^{Im}) \rangle_{\rho_m^j} \right)
        \]
    \EndFor  

    \State \textbf{Step 4 - Update of Lagrange multipliers for state constraint}
    \For{$m =0, \dots, N-1$}
        \State Compute the coherence function \( C^2(\rho_m^j) \)
        \If{ \( C^2(\rho_m^j) < \beta \) }
            \State Set \( \mu_{1,m}^{j} = \mu_{1,m-1}^{j} \)
        \ElsIf{ \( C^2(\rho_m^j) > \alpha \) }
            \State Set \( \mu_{2,m}^{j} = \mu_{2,m-1}^{j} \)
        \ElsIf{ \( C^2(\rho_m^j) = \beta \) }
              \State Update \( \mu_{1,m}^{j} = \mu_{1,m-1}^{j} + \eta_i
            ({ C^2(\rho_m^j)-C^2(\rho_{m-1}^j)})
             \)
        \ElsIf{ \( C^2(\rho_m^j) = \alpha \) }
             \State Update \( \mu_{2,m}^{j} = \mu_{2,m-1}^{j} - \eta_d
            ({ C^2(\rho_m^j)-C^2(\rho_{m-1}^j)})
             \)
        \EndIf
        \State Update the combined Lagrange multiplier:
        \[
        \mu_{m}^{j} = 2 \left( \mu_{1,m}^{j} - \mu_{2,m}^{j} \right)
        \]        
    \EndFor
    \State \textbf{Step 5 - Computation of the Pontryagin Hamilton function}
    \For{$m = 0, \dots, N-1$}
        \State Compute the Pontryagin Hamiltonian function \(\mathcal{H}_m^j\) as:
        \begin{equation*}
            \begin{aligned}
               &\mathcal{H}_m^j = \Tr\left( \left( \pi_m^j + \mu_m^j \sum_{j<k} \left( \langle \delta_{jk}^{\text{Re}}, \rho_m^j \rangle \delta_{jk}^{\text{Re}} + \langle \delta_{jk}^{\text{Im}}, \rho_m^j \rangle \delta_{jk}^{\text{Im}} \right) \right)^\dagger \right. \\          
               &\left. \left( -\frac{i}{\hbar} [H_m^j, \rho_m^j] + \mathcal{L}(\rho_m^j) \right) \right) + {u_m^j}^2  \\
            \end{aligned}
        \end{equation*}
    \EndFor
        \State \textbf{Step 6 - Update the control function}
    \For{$m = 0, \dots, N-1$}
     \State 
     Update the control values \(u_m^{j+1}\) using convex combination, where $\zeta_1+\zeta_2=1$
            \begin{equation*}
            \begin{aligned}
             u_m^{j+1} &= \zeta_1 \frac{i}{2\hbar} \Tr \Big( \big( \pi_m^j - \mu_m^j \sum_{j<k} \big( \langle \delta_{jk}^{\text{Re}}\rangle _{\rho_m^j} \delta_{jk}^{\text{Re}} + \langle \delta_{jk}^{\text{Im}}\rangle _{\rho_m^j }\delta_{jk}^{\text{Im}} \big) \big)^\dagger \\
            &\quad \left[{H_C}_m^j, \rho_m^j \right] \Big)+\zeta_2 u_m^j
            \end{aligned}
        \end{equation*}
        \EndFor
    \algstore{myalg}
\end{algorithmic}
\end{algorithm}

\begin{algorithm}
\begin{algorithmic}[1]
\small
\algrestore{myalg}
\State \textbf{Step 7 - Stopping Test}
    \State Define tolerance values: \(\varepsilon_u\) and \(\varepsilon_\mu\) (small positive numbers).
    \For{$m = 0, \dots, N-1$}

         \State Check the convergence criteria for the state trajectory:
        \[ |\rho_m^{j+1} - \rho_m^j| < \varepsilon_\rho \]
        \State Check the convergence criteria for the adjoint variable:
        \[ |\pi_m^{j+1} - \pi_m^j| < \varepsilon_\pi \]
        \State Check the convergence criteria for the control:
        \[ |u_m^{j+1} - u_m^j| < \varepsilon_u \]
        \State Check the convergence criteria for the Lagrange multipliers:
        \[ |\mu_{m}^{j+1} - \mu_m^j| < \varepsilon_\mu \]
    \EndFor
    \State
   \textbf{if}
    $\max\{|\rho_m^{j+1} - \rho_m^j|, |\pi_m^{j+1} - \pi_m^j|, |u_m^{j+1} - u_m^j|, |\mu_{m}^{j+1} - \mu_m^j|\}<\epsilon$ 
    \State \textbf{Stop}
    \State \textbf{else if} increment \(j = j + 1\) and go to \textbf{Step 2}.
\end{algorithmic}
\end{algorithm}
Algorithm 1 provides a step-by-step method for solving the state-constrained quantum optimal control problem with coherence preservation. It initializes control variables and multipliers, computes state and adjoint trajectories through numerical methods, updates Lagrange multipliers to enforce state constraints, calculates the Pontryagin Hamiltonian, and iteratively updates control functions. The algorithm concludes with a convergence check to ensure optimal control, state trajectory, and multiplier conditions are met before termination.

\begin{remark} [Control Constraints] 
In addition to the state constraints handled by the presented algorithm, control constraints can also be efficiently addressed by applying the method of saturation functions, as described in \cite{graichen2010handling}. This method transforms the control constraints into smooth functions that approximate the behavior of the original constraints, ensuring that the control variables remain within their prescribed limits without imposing hard bounds. The approach involves dynamically extending the system by incorporating these saturation functions into the Pontryagin Hamiltonian, thus reformulating the original problem as an unconstrained optimization problem. This strategy can be seamlessly integrated into the current algorithm, ensuring that both state and control constraints are respected during the optimization process, while maintaining numerical stability and preventing constraint violations.
\end{remark}

\section{Application of the Algorithm to a Qutrit System}
We now apply the developed algorithm to the specific case of a qutrit, a three-level quantum system, subjected to Markovian decoherence. The qutrit consists of three quantum states, \( |0\rangle \), \( |1\rangle \), and \( |2\rangle \), and is described by a density matrix \( \rho \in \mathbb{C}^{3\times 3} \): $\rho = \begin{pmatrix}
\rho_{00} & \rho_{01} & \rho_{02} \\
\rho_{10} & \rho_{11} & \rho_{12} \\
\rho_{20} & \rho_{21} & \rho_{22}
\end{pmatrix}$.
The basis states for the qutrit system are represented as follows:
\[
|0\rangle = \begin{pmatrix} 1 \\ 0 \\ 0 \end{pmatrix}, \quad |1\rangle = \begin{pmatrix} 0 \\ 1 \\ 0 \end{pmatrix}, \quad |2\rangle = \begin{pmatrix} 0 \\ 0 \\ 1 \end{pmatrix}.
\]
The off-diagonal elements \( \rho_{01} \), \( \rho_{02} \), and \( \rho_{12} \) represent the coherence between the different pairs of states. In this qutrit system, the coherence between states \( |0\rangle \) and \( |1\rangle \) can be captured by the following operators:
\begin{align*}
\delta_{01}^{\text{Re}} &= |0\rangle \langle 1| + |1\rangle \langle 0|, \\
\delta_{01}^{\text{Im}} &= -i \left( |0\rangle \langle 1| - |1\rangle \langle 0| \right).
\end{align*}
Similarly, the coherence between states \( |0\rangle \) and \( |2\rangle \), and between \( |1\rangle \) and \( |2\rangle \), is described by their respective real and imaginary coherence operators \( \delta_{02}^{\text{Re}} \), \( \delta_{02}^{\text{Im}} \), and \( \delta_{12}^{\text{Re}} \), \( \delta_{12}^{\text{Im}} \).
The total coherence between states \( |0\rangle \) and \( |1\rangle \) is then computed using the coherence function
\[
C(\rho) = \sqrt{ \left( \langle \delta_{01}^{\text{Re}} \rangle_\rho \right)^2 + \left( \langle \delta_{01}^{\text{Im}} \rangle_\rho \right)^2 }.
\]
The loss of coherence between the two states can be due to decoherence, often caused by interaction with the environment. However, by applying a control field, it is possible to drive the transitions between these states and counteract this decay process.

In the context of this qutrit, we assume the system has energy levels associated with the states \( |0\rangle \), \( |1\rangle \), and \( |2\rangle \), with corresponding energies \( E_0 \), \( E_1 \), and \( E_2 \). The drift Hamiltonian describes the free evolution of the system without external control fields and can be written as follows:
\begin{equation*}
    \begin{aligned}
      H_0 &= \frac{E_2 - E_0}{3}  (|2\rangle \langle 2| - |0\rangle \langle 0|) + \frac{E_2 - E_1}{3} (|2\rangle \langle 2| - |1\rangle \langle 1|)\\
      &\quad + \frac{E_0 - E_1}{3} (|0\rangle \langle 0| - |1\rangle \langle 1|)  
    \end{aligned},
\end{equation*}
where each term represent the population differences between the corresponding energy levels in the system. Thus, the drift Hamiltonian \( H_0 \) captures the natural dynamics of the system, reflecting the energy differences between the three states. We assume that the transition dipole moments for the linearly polarized field are real. Hence, the control field, in the dipole approximation, can be described by a time-dependent envelope of the field $u(t)$, driving frequency $\omega_d$, and initial phase $\phi_d$, hence, the corresponding control Hamiltonian is given by
\[
H_C = u(t) \left( e^{i \phi_d} |0\rangle \langle 1| + e^{-i \phi_d} |1\rangle \langle 0| \right) \cos(\omega_d t)
\]
This control $u(t)$ is designed to influence the system’s coherence and dynamics, allowing precise control over the quantum states.

In the context of our qutrit system, the Lindblad operators describe the decay processes from the excited state \( |2\rangle \) to the two lower states \( |0\rangle \) and \( |1\rangle \), with respective decay rates \( \gamma_0 \) and \( \gamma_1 \). These transitions are modeled by the Lindblad operator, which introduces the dissipative effects into the system's dynamics.

The Lindblad operators \( L_k \) for the qutrit system are defined as
\[
L_0 = |0\rangle \langle 2|, \quad L_1 = |1\rangle \langle 2|, \quad L_{3}=\ket{0}\bra{0}-\ket{1}\bra{1}.
\]
The operator \( L_0 \) represents the transition from the excited state \( |2\rangle \) to the ground state \( |0\rangle \), while \( L_1 \) represents the transition from \( |2\rangle \) to \( |1\rangle \). Additionally, dephasing between the states \( |0\rangle \) and \( |1\rangle \) is introduced via a dephasing operator \( L_{3} \).
Thus, the full Lindblad term that governs the system's evolution under decoherence is given by
\begin{equation} \label{lrho}
\begin{aligned} 
&\mathcal{L}(\rho) = \frac{1}{2} \gamma_0 \left( 2 |0\rangle \langle 2| \rho |2\rangle \langle 0| - |2\rangle \langle 2| \rho - \rho |2\rangle \langle 2| \right)+ \frac{1}{2} \gamma_1\\ & \left( 2 |1\rangle \langle 2| \rho |2\rangle \langle 1| - |2\rangle \langle 2| \rho - \rho |2\rangle \langle 2| \right)+\frac{1}{2} \gamma_d \left( 2 \left( |0\rangle \langle 0| - |1\rangle \langle 1| \right) \right.\\
&\left.\rho \left( |0\rangle \langle 0| - |1\rangle \langle 1| \right) - \left( |0\rangle \langle 0| - |1\rangle \langle 1| \right)^2 \rho - \rho \left( |0\rangle \langle 0| - |1\rangle \langle 1| \right)^2 \right) 
\end{aligned} \end{equation}
This formulation reflects the combined effects of energy dissipation and coherence loss due to interactions with the environment.
\begin{prop}
The indicated Lindbladian \( \mathcal{L}(\rho) \) in \eqref{lrho} is non-unital
\end{prop}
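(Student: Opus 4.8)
The plan is to verify the standard unitality criterion directly: a Lindbladian is unital precisely when $\mathcal{L}(I)=0$, and — as recalled just after \eqref{lindbladop} — one has $\mathcal{L}(I)=\sum_k \gamma_k[L_k,L_k^\dagger]$, so it suffices to exhibit jump operators in \eqref{lrho} that fail to be Hermitian and whose commutator contributions are not cancelled by the others. First I would record the three operators $L_0=|0\rangle\langle 2|$, $L_1=|1\rangle\langle 2|$, $L_3=|0\rangle\langle 0|-|1\rangle\langle 1|$ together with their adjoints $L_0^\dagger=|2\rangle\langle 0|$, $L_1^\dagger=|2\rangle\langle 1|$, $L_3^\dagger=L_3$, noting immediately that $L_0,L_1$ are non-Hermitian while $L_3$ is Hermitian.

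Next I would compute the commutators termwise using orthonormality of the basis. One gets $L_0L_0^\dagger=|0\rangle\langle 0|$ and $L_0^\dagger L_0=|2\rangle\langle 2|$, hence $[L_0,L_0^\dagger]=|0\rangle\langle 0|-|2\rangle\langle 2|$; identically $[L_1,L_1^\dagger]=|1\rangle\langle 1|-|2\rangle\langle 2|$; and $[L_3,L_3^\dagger]=0$ since $L_3$ is Hermitian. Summing against the rates yields $\mathcal{L}(I)=\gamma_0\big(|0\rangle\langle 0|-|2\rangle\langle 2|\big)+\gamma_1\big(|1\rangle\langle 1|-|2\rangle\langle 2|\big)=\operatorname{diag}\!\big(\gamma_0,\gamma_1,-(\gamma_0+\gamma_1)\big)$, which is a nonzero matrix for any positive decay rates $\gamma_0,\gamma_1$, so $\mathcal{L}$ is non-unital. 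As a consistency check I would also substitute $\rho=I$ straight into \eqref{lrho}: the two amplitude-damping blocks reproduce exactly this diagonal, and the dephasing block vanishes because $(|0\rangle\langle 0|-|1\rangle\langle 1|)^2=|0\rangle\langle 0|+|1\rangle\langle 1|$, so the third line of \eqref{lrho} evaluated at $I$ gives $\tfrac12\gamma_d\big(2(|0\rangle\langle 0|+|1\rangle\langle 1|)-2(|0\rangle\langle 0|+|1\rangle\langle 1|)\big)=0$.

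There is no genuine obstacle here — the statement reduces to a short finite-dimensional computation. The only point meriting a moment of care is confirming that the Hermitian dephasing channel cannot secretly cancel the contribution of the non-Hermitian amplitude-damping channels; since it contributes nothing at all to $\mathcal{L}(I)$, the conclusion $\mathcal{L}(I)\neq 0$ is robust. I would close by remarking that this is precisely the expected physical signature of spontaneous emission: population is irreversibly funneled out of $|2\rangle$ into $|0\rangle$ and $|1\rangle$, so the maximally mixed state fails to be stationary, in contrast with a purely dephasing (unital) model.
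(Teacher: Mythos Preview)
Your argument is correct and follows the same route as the paper: evaluate $\mathcal{L}(I)$ directly and observe it is nonzero. Your explicit diagonal $\operatorname{diag}(\gamma_0,\gamma_1,-(\gamma_0+\gamma_1))$ is in fact the right one; the paper's stated $\operatorname{diag}(\gamma_0,\gamma_0+\gamma_1,-2\gamma_0-\gamma_1)$ appears to be an arithmetic slip, but either way the conclusion $\mathcal{L}(I)\neq 0$ stands.
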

\begin{proof}
To confirm that the Lindbladian \( \mathcal{L} \) is non-unital, we must show that \( \mathcal{L}(I) \), when acting on the identity matrix \( I \) for a qutrit system, yields a non-zero result . 
By applying the Lindblad operators to \( I \) and evaluating this expression, one obtains $\mathcal{L}(I) = \text{diag}(\gamma_0, \gamma_0 + \gamma_1, -2\gamma_0 - \gamma_1)$.
Since \( \mathcal{L}(I) \neq 0 \), this confirms that the Lindbladian is non-unital, reflecting that the system's evolution includes irreversible processes. 
\end{proof}

\subsection{Numerical Simulation}

In this section, we describe the numerical simulation of a qutrit subjected to Markovian decoherence and dephasing. 
The decay rates \( \gamma_0 \) and \( \gamma_1 \) are set to 0.1 and 0.001 for the transitions from state \( |2\rangle \) to states \( |0\rangle \) and \( |1\rangle \), respectively. The dephasing rate between states \( |0\rangle \) and \( |1\rangle \), \( \gamma_d \), is set to 0.005. The energy levels are set to $E_0 = 1$, 
$E_1 = 1.5$, and $E_2 = 2$. 
The system's evolution is computed over a time span of 20 a.u. (arbitrary units), and the density matrix is updated at 1000 time points 
to numerically solve the Lindblad equation. In our numerical simulation, we consider initial state $\rho_0 =\left(  \begin{smallmatrix}
0.21 & 0.195-0.195i& 0 \\
0.195+0.195i & 0.78 & 0 \\
0 & 0 & 0.01
\end{smallmatrix} \right) $. First, we consider the system evolution without application of any external control field. This allows analyzing the effects of decay and dephasing captured by the Lindblad operators under
the natural dynamics of the system.
The coherence between states \( |0\rangle \) and \( |1\rangle \) is computed. The simulation results, as shown in Fig.~\ref{fig:NC}(a), indicate that the coherence decays over time. This behavior aligns with theoretical expectations for a system subject to Markovian decoherence, where coherence is gradually lost due to interactions with the environment. The evolution of the key density matrix elements, the real and imaginary parts of the off-diagonals, $\rho_{01}$ and $\rho_{10}$, is plotted in Fig.~\ref{fig:NC}(b). The real and imaginary components of the coherence between states \( |0\rangle \) and \( |1\rangle \), represented by Re(\( \rho_{01} \)) and Im(\( \rho_{01} \)), show oscillations that gradually dampen, reflecting the loss of coherence due to dephasing. 
\begin{figure}
    \centering
    \includegraphics[scale=0.25]{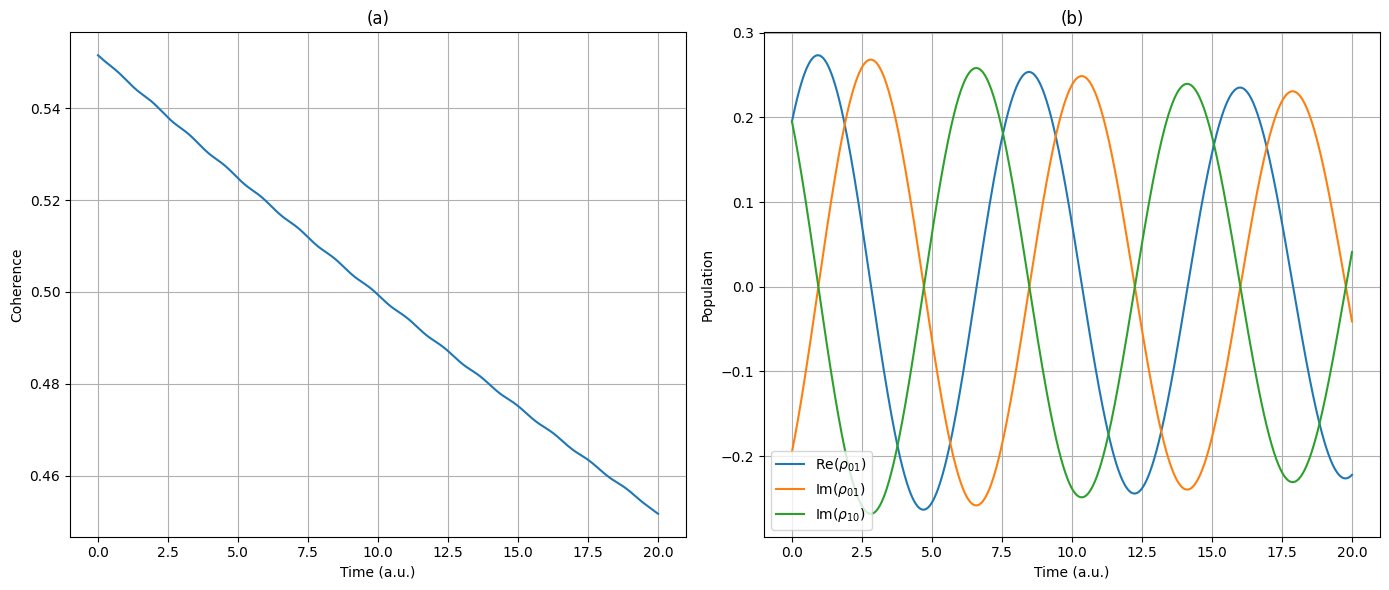}
    \caption{(a) Coherence between states \( |0\rangle \) and \( |1\rangle \) (b) Evolution of off-diagonal density matrix elements over time (with no action of control).}
    \label{fig:NC}
\end{figure}
We now consider including the control Hamiltonian with parameters set to control amplitude $u = 0.1$, the phase $\phi_d=\pi/2$, and the driving frequency 
$\omega_d = 0.1$. 
As illustrated in Fig.~\ref{fig:CC}, the coherence between states shows an oscillatory decay over time, indicating that while the control field momentarily preserves coherence, it cannot completely counteract the effects of decoherence and dephasing. The density matrix elements also exhibit oscillations, reflecting the influence of the control field on the population redistribution among the qutrit states. 
The behavior highlights the transient effect of the control field in partially mitigating the loss of coherence.

\begin{figure}
    \centering
    \includegraphics[scale=0.25]{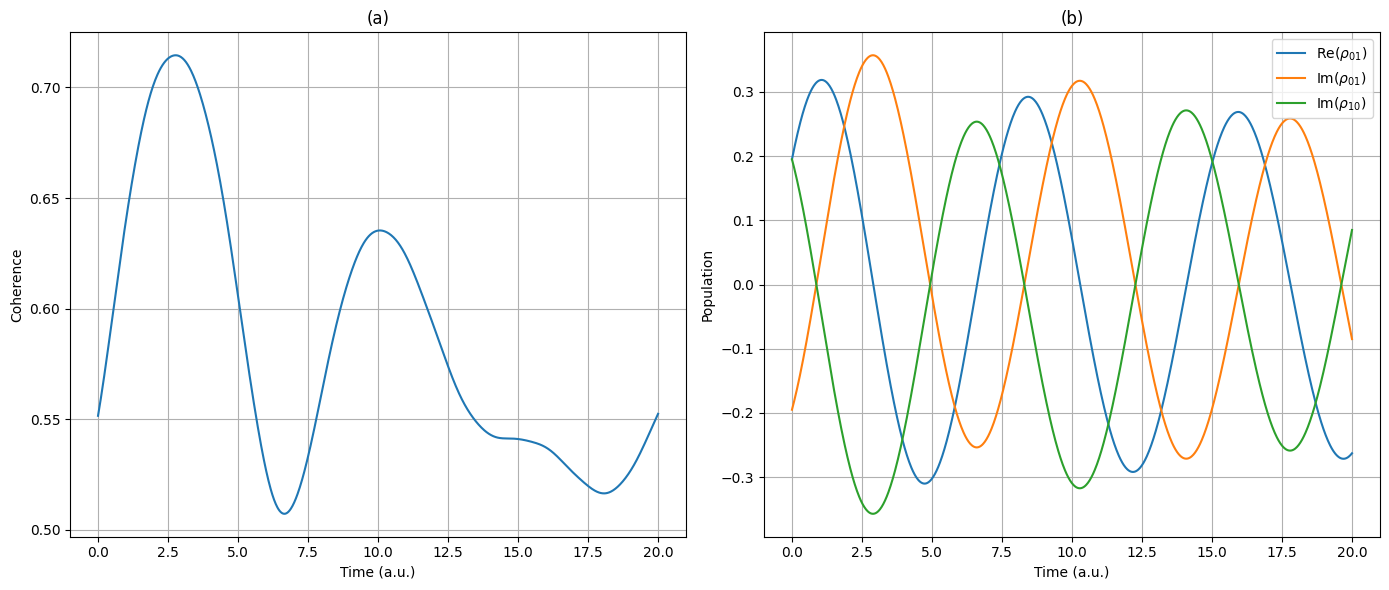}
    \caption{(a) Coherence between states \( |0\rangle \) and \( |1\rangle \) (b) Evolution of density matrix elements over time (with constant control parameters, with dephasing).}
    \label{fig:CC}
\end{figure}

Finally, we demonstrate the results obtained from solving the state constrained optimal control problem for preserving quantum coherence.
The initial coherence is $C(\rho_0)=0.551$, and the lower and upper bounds were set to $\alpha=0.550$ and $\beta=0.553$, respectively. 
To provide a comprehensive view of the quantum system's behavior under optimal control, Fig.~\ref{fig:uopt_analysis} includes four subplots. 
Subplot (a) displays the temporal evolution of the real and imaginary parts of the off-diagonal density matrix element $\rho_{01}$. The controlled oscillations indicate active modulation aimed at preserving coherence despite environmental interactions.
Subplot (b) shows the coherence between the states $|0\rangle$ and $|1\rangle$ over time. The flat behavior of the coherence in this subplot suggests that the applied control manages to maintain a consistent coherence level across the time span.
In subplot (c), the dynamics of the adjoint variables $\pi_{00}$, $\pi_{01}$, $\pi_{12}$, and $\pi_{10}$ are presented. These represent the real components associated with the system's costate evolution. The behavior of these variables provides insight into how the optimization process adjusts the state trajectory to meet the control objectives.
Lastly, subplot (d) captures the control amplitude's time-dependent modulation, which directly impacts the system's evolution. The variations in control illustrate the adaptive strategy employed to counteract the dissipative and dephasing effects, thereby aiding in the coherence preservation. This analysis demonstrates the effectiveness of the control strategy in mitigating coherence loss and guiding the system's dynamics toward desired outcomes. Fig.~\ref{fig:convergence_plot} presents the convergence plot, which depicts the progression of the convergence value over multiple iterations. The decreasing trend observed in the plot demonstrates the effectiveness of the control or optimization strategy being applied, as the convergence value approaches zero or stabilizes, indicating that the system is nearing an optimal or steady state. This behavior is consistent with expectations in optimization scenarios where iterative adjustments progressively refine the solution.
\begin{figure}[t]
    \centering
    \includegraphics[scale=0.25]{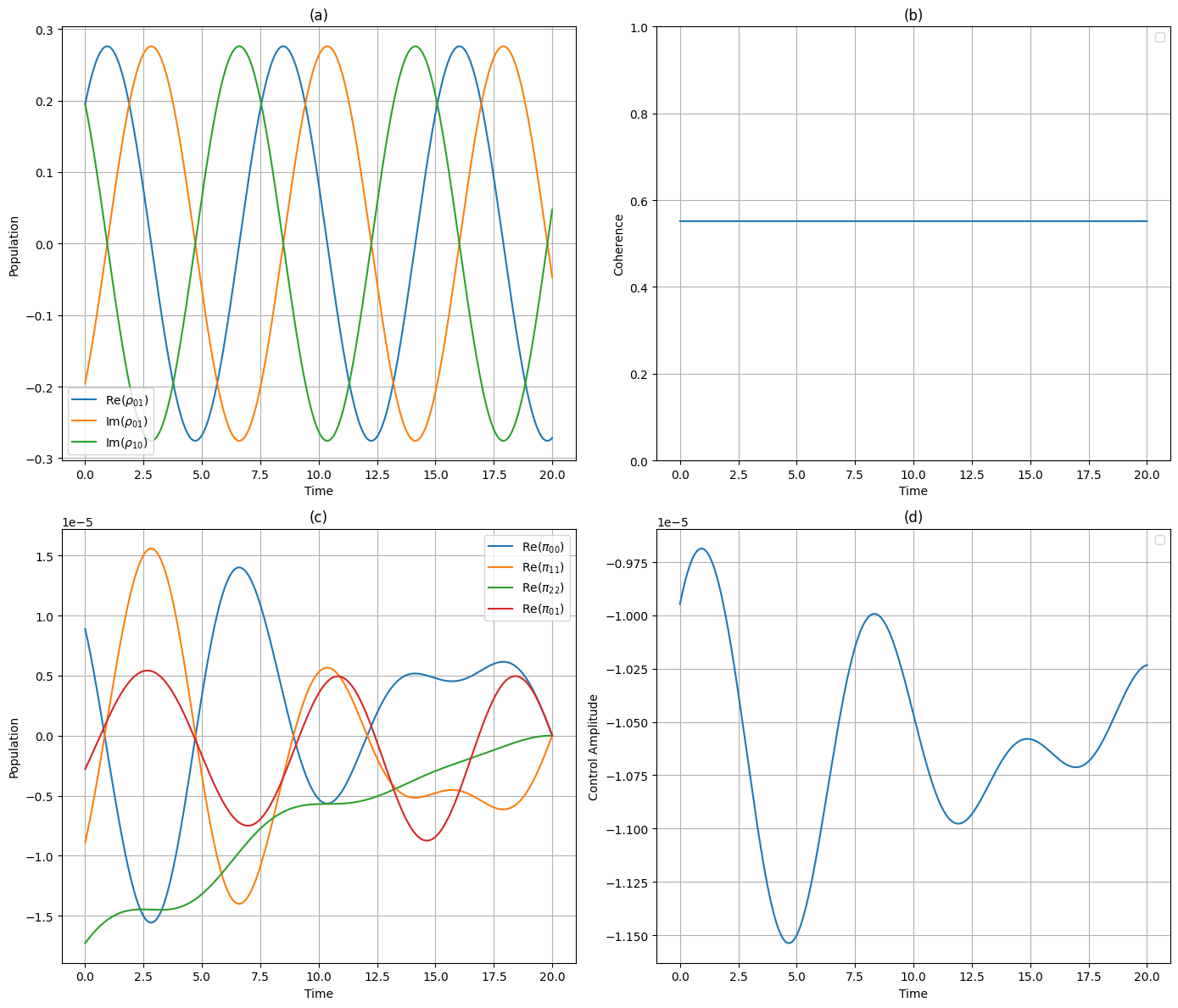}
    \caption{
        (a) Evolution of the real and imaginary components of the density matrix element $\rho_{01}$ under the optimized control strategy. 
        (b) Coherence measured between states $|0\rangle$ and $|1\rangle$ over time. 
        (c) Population dynamics for the adjoint variables $\pi_{00}$, $\pi_{01}$, $\pi_{12}$, and $\pi_{10}$ during the control process. 
        (d) Control amplitude evolution over time illustrating the modulation applied to influence the system's state.
    }
    \label{fig:uopt_analysis}
\end{figure}
\begin{figure}[ht]
    \centering
    \includegraphics[scale=0.25]{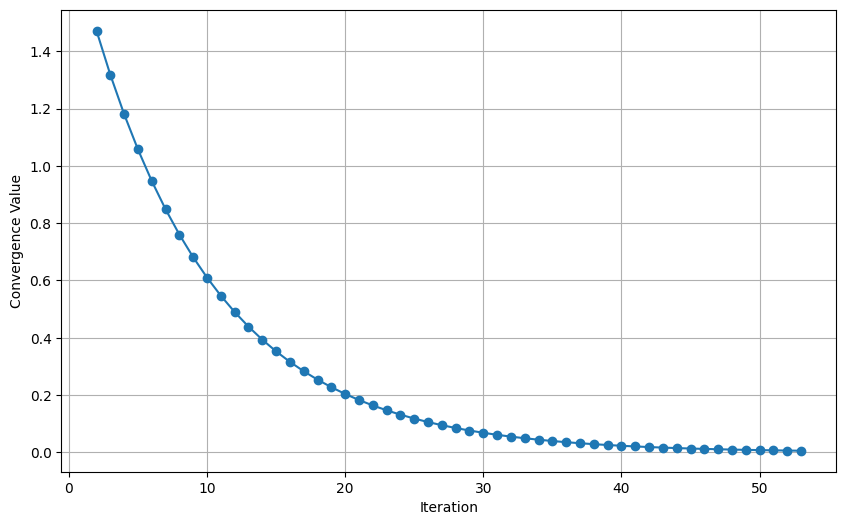}
    \caption{
        Convergence plot showing the evolution of the convergence value as a function of iterations. The plot illustrates how the convergence value decreases progressively. 
    }
    \label{fig:convergence_plot}
\end{figure}

\section{Conclusion}

This paper proposed a minimum energy optimal control strategy for maintaining quantum coherence in a multi level quantum system under dissipative conditions, modeled by the Lindblad master equation. Using Pontryagin’s Minimum Principle with state constraints, our approach effectively balances coherence preservation with minimal energy expenditure in the presence of Markovian decoherence. 

\bibliographystyle{IEEEtran}
\bibliography{IEEEabrv,ref}

\end{document}